\begin{document}
\def\bbbn{{\rm I\!N}} 
\newcommand{\latex}{\LaTeX}
\newcommand{\tex}{\TeX}
\newcommand{\set}[1]{\{#1\}}
\newcommand{\flor}[1]{\lfloor{#1}\rfloor}

\def \prend{\vrule depth-1pt height7pt width6pt}
\def \proof{\bigbreak\noindent{\bf Proof.\ \ }}
\def \endpf{{\ \ \prend \medbreak}}

\def\vs{\vspace{1mm}}

\newtheorem{theorem}{T\/heorem}[section]
\newtheorem{apptheo}{T\/heorem}[section]
\newtheorem{corollary}{Corollary}[section]
\newtheorem{definition}{Definition}[section]
\newtheorem{lemma}{Lemma}[section]
\newtheorem{applem}{Lemma}[section]
\newtheorem{example}{Example}[section]
    \newtheorem{appexample}{Example}[section]
\newtheorem{fact}{Fact}[section]
\newtheorem{claim}{Claim}[section]
\newtheorem{proposition}{Proposition}
\newtheorem{remark}{Remark}[section]
\newcommand{\propersubset}{\subset}
\newtheorem{problem}{Open problem}

%
     %
\title{Operational State Complexity of\\
  Deterministic Unranked Tree Automata}

\author{Xiaoxue Piao \qquad\qquad Kai Salomaa
\institute{School of Computing, Queen's University\\
Kingston, Ontario K7L 3N6, Canada}
\email{\{piao, ksalomaa\}@cs.queensu.ca}
}
\def\titlerunning{Unranked Tree Automata}
\def\authorrunning{X. Piao \& K. Salomaa}


\maketitle

\begin{abstract}
    We consider  the state complexity of
basic operations on tree languages recognized by
deterministic unranked
     tree automata.
     For the operations of
     union and intersection the
upper and lower bounds of both weakly and strongly
 deterministic
     tree automata are obtained.
For tree concatenation we establish a tight upper bound that
is of a different order than the known state complexity
of concatenation of regular string languages.
We show that $(n+1) ( (m+1)2^n-2^{n-1} )-1$
     vertical states are sufficient, and
necessary in the worst case, to recognize the concatenation
of tree languages recognized by  (strongly or
weakly) deterministic automata with, respectively,
     $m$ and $n$ vertical states.\\
Keywords:
     operational state complexity, tree automata,
     unranked trees, tree operations
\end{abstract}

\section{Introduction}\label{s:in}

As XML \cite{dtddef} has  played increasingly important roles in
data representation and exchange through the web,  tree automata
have gained renewed interest, particularly tree automata operating
on unranked trees. XML documents can be abstracted as unranked
trees, which makes unranked tree automata a natural and
fundamental model for various XML processing tasks
\cite{CDG,MaN,Sc}. Both deterministic and nondeterministic
unranked tree automata have been studied.

One method to handle unranked trees is to encode them as ranked
trees and then use the classical
theory of ranked tree automata. However, the
encoding may result in trees of unbounded height since there is
no a priori restriction on the number of the children of a node in
unranked trees. Also depending on various applications, it may
be difficult to come up with a proper
 choice of the encoding method.

Descriptional complexity of finite automata and related
structures has been extensively studied
in recent years~\cite{GH,HK2,HK,Yu,Yu2}.
Here we consider operational state complexity of deterministic unranked
tree automata. Operational state complexity describes how the size
of an automaton varies under regularity preserving operations.
The corresponding results for string languages are well
known~\cite{galina,Yu,YuZhSa94},
however, very few results have been obtained for tree automata.
While   state complexity results for  tree automata
operating on ranked trees
 are often similar to corresponding
results on  regular string automata \cite{Yu}, the situation becomes
essentially different for automata operating on unranked trees.
An unranked tree automaton has two different types of states,
called horizontal and vertical states, respectively.
There are also other automaton models that
can be used to process  unranked trees, such
as nested word automata and stepwise tree automata. The state
complexity of these models has been studied
in~\cite{HS,mn,nestps}.

We study two different models of determinism for unranked
tree automata.
We call the usual deterministic unranked tree automaton \cite{CDG}
model where
the horizontal languages defining the transitions are specified by
DFAs (deterministic finite automata),
 a {\em weakly deterministic tree automaton} (or WDTA). For the
other variant of determinism for unranked tree automata,  we
refer to the corresponding automaton model as a
 {\em strongly deterministic unranked tree automaton} (or SDTA).
This model was introduced by Cristau, L\"oding and
Thomas~\cite{CLT}, see also
Raeymaekers and Bruynooghe~\cite{RB}.
SDTAs can be minimized efficiently and the
minimal automaton is unique~\cite{CLT}. On the other hand, the
minimization problem for WDTAs is NP-complete and the minimal
automaton need not be unique~\cite{mn}.

We give upper  and lower bounds for
the numbers of both vertical and
horizontal states for the operations of union and intersection.
The upper bounds for vertical states
are tight for both SDTAs and WDTAs. We also get upper bounds which
are almost tight for the number
of the horizontal states of SDTAs. Obtaining
a matching
lower bound for the horizontal states of WDTAs turns out to be
very problematic. This is mainly because the minimal WDTA may not
be unique and the minimization of WDTAs is intractable~\cite{mn}.
Also, the number of horizontal states of WDTAs can be reduced by
adding vertical states, i.e., there can be trade-offs between
the numbers of horizontal and vertical states, respectively.

The upper bounds for the number of
 vertical states for union
and intersection of WDTAs and SDTAs are,
as expected, similar to the upper
bound for the corresponding operation on ordinary string automata.
Already in the case of union and intersection,
the upper bounds for the numbers
horizontal states  are dramatically different for WDTAs and
SDTAs, respectively.
In an SDTA, the
horizontal language associated with label $\sigma$ is represented
with a single DFA $H_\sigma$ augmented with an output function
$\lambda$. The state assigned to a node labeled with $\sigma$ is
determined by the final state reached in $H_\sigma$ and $\lambda$.
On the other hand, in a WDTA, the horizontal languages associated
with a given label $\sigma$ and different states are represented
by distinct DFAs. The state assigned to a node labeled with $\sigma$
depends on the choice of the DFA.

We consider also the state complexity of
 (tree)
concatenation of  SDTAs. It is well known
that $m2^n-2^{n-1}$ states are sufficient to accept the
concatenation of an $m$ state DFA and an $n$ state
DFA~\cite{YuZhSa94}. However, the tight upper bound to accept the
concatenation of
unranked tree automata,
with $m$ and $n$ vertical states respectively,
turns out to be $(n+1) ( (m+1)2^n-2^{n-1} )-1$. The
factor $(n+1)$ is necessary here because the automaton accepting
the concatenation of two tree languages must keep track of the
computations where no concatenation has been done. For string
concatenation, there is only one path and the concatenation always
takes place  somewhere on that path.
For non-unary trees, there is no way
that the automaton can foretell on which branch the concatenation
is done and, consequently, the automaton for concatenation needs
considerably more states.
 It should be emphasized
that this phenomenon is not caused by any particular construction
used for the automaton to accept the concatenation of given tree
languages, and we have a matching lower bound result.

Since complementation is an ``easy'' operation for both strongly
and weakly deterministic tree automata, we do not investigate its
state complexity in this paper. Note that we do not require the
automaton models to be complete (i.e., some transitions may be
undefined). A (strongly or weakly) deterministic automaton
accepting the complement of a tree language recognized by the same
type of automaton would need at most one additional vertical state
and it is easy to see that this bound can be reached in the worst
case.

The  paper is organized as follows. Definitions of
unranked tree automata and other notations are given in
section~\ref{pre}. The upper bounds and corresponding
lower bounds  for union and
intersection of SDTAs are presented in section~\ref{sdta}. In
section~\ref{wdta}, the state complexity of union and
intersection of WDTAs is discussed. The tight bound for the
number of vertical
states for tree concatenation of SDTAs is
given in section~\ref{con}. The same construction works for
WDTAs.


\section{Preliminaries}\label{pre}

Here we briefly recall some notations and definitions concerning
trees and tree automata. A general reference on tree automata
is~\cite{CDG}.

Let $\bbbn$ be the set of non-negative integers. A {\em tree
domain} $D$ is a finite set of elements in $\bbbn^*$ with the
following two properties: (i) If $w\in D$ and $u$ is a prefix of
$w$ then $u\in D$. (ii) If $ui \in D$, $i\in \bbbn$ and $j<i$ then
$uj \in D$. The nodes in an unranked tree $t$ can be denoted by a
tree domain $dom(t)$, and $t$ is a mapping from $dom(t)$ to the
set of labels $\Sigma$. The set of $\Sigma$-labeled trees is
$T_\Sigma$.

For $t,t'\in T_\Sigma$ and $u \in dom(t')$, $t'(u\leftarrow t)$
denotes the tree obtained from $t'$ by replacing the subtree at
node $u$ by $t$. The concatenation of trees $t$ and $t'$ is
defined as
$t\cdot t' = \{t'(u\leftarrow t)\mid u\in
leaf(t')\}$.
The concatenation operation is extended in the
natural way to sets of trees $L_1$, $L_2$: $$L_1\cdot L_2 =
\bigcup_{t\in L_1, t'\in L_2} t\cdot t'.$$

%
We denote a tree $t = b ( a_1, \ldots ,a_n )$, whose root is
labeled by $b$ and leaves are labeled by $ a_1, \ldots ,a_n $,
simply as $b ( a_1 \ldots a_n)$. When $a_1= \ldots =a_n=a$, write
$t=b(a^n)$. By a slight abuse of notation, for a unary tree $t =
a_1 ( a_2 ( \ldots (a_n)\ldots)) $, we write $t = a_1 a_2 \ldots
a_n$ for abbreviation. When $a_1= \ldots =a_n=a$, we write $t=a^n$
for short. (In each case it should be clear from the context
whether $a^n$ refers to a sequence of leaves or to a unary tree.)


Next we briefly recall the definitions of the two variants of
deterministic bottom-up tree automata considered here.
A {\em weakly deterministic unranked tree automaton} (WDTA) is a
4-tuple $A=(Q,\Sigma,\delta,F)$ where $Q$ is a finite set of
states, $\Sigma$ is the alphabet, $F \subseteq Q$ is the set of
final states, $\delta$ is a mapping from $Q\times\Sigma$ to the
subsets of $(Q \cup \Sigma)^*$
which satisfies the condition that, for each $q
\in Q, \sigma \in \Sigma, \delta(q,\sigma)$ is a regular language
and for each label $\sigma$ and every two states $q_1\neq q_2$,
$\delta(q_1,\sigma)\bigcap\delta(q_2,\sigma)=\emptyset$. The
language $\delta(q,\sigma)$ is called the {\em horizontal
language} associated with $q$ and $\sigma$ and it is specified
by a DFA $H_{q,\sigma}^A$.

Roughly speaking, a WDTA operates as follows.
If $A$ has reached the children of a
$\sigma$-labelled node $u$ in states
 $q_1$, $q_2$ ,..., $q_n$, the
computation assigns state $q$ to node $u$ provided that
$q_1q_2...q_n\in\delta(q,\sigma)$. In the sequence
$q_1q_2...q_n$ an element $q_i \in \Sigma$ is interpreted to
correspond to a leaf labeled by that symbol.
A WDTA is a deterministic
hedge automaton~\cite{CDG} where each horizontal language
is specified using a DFA.

Note that in the usual definition of~\cite{CDG} the
horizontal languages
are subsets of $Q^*$. In order to simplify some constructions,
 we allow also the use of symbols  of the
alphabet $\Sigma$ in the horizontal languages, where
a symbol $\sigma \in \Sigma$ occurring in a word of
a horizontal language is always interpreted to label
a leaf of the tree.
The convention does not
change the state complexity bounds in any
significant way because we use small
constant size alphabets and we can think that the tree automaton
assigns to each leaf labeled by $\sigma \in \Sigma$ a particular
state that is not used anywhere else in the computation.

A {\em strongly deterministic unranked tree automaton} (SDTA) is a
4-tuple $A=(Q,\Sigma,F,\delta)$, where $Q, \Sigma, F$ are
similarly defined as for WDTAs. For each $a \in \Sigma$, the
horizontal languages $\delta(q, a)$, $q \in Q$, are defined by a
single DFA augmented with an output function as follows. For $a
\in \Sigma$ define $D_a=(S_a,Q \cup \Sigma,s_a^0,\gamma_a,E_a,
\lambda_a)$ where $(S_a,Q \cup \Sigma,s_a^0,\gamma_a,E_a)$ is a
DFA and $\lambda_a$ is a mapping $S_a\rightarrow Q$. For all $q
\in Q$ and $a \in \Sigma$, the horizontal language $\delta(q,a)$
is specified by $D_a$ as the set $\{ w \in (Q \cup \Sigma)^* \mid
\lambda_a(\gamma_a^*(s_a^0,w))=q\}$. Intuitively, when $A$ has
reached the children of a node $u$ labelled by $a$ in states $q_1,
\ldots, q_m$ (an element $q_i \in \Sigma$ is interpreted as a
label of a leaf node), the state at $u$ is determined (via the
function $\lambda_a$) by the state that the DFA $D_a$ reaches
after reading the word $q_1 \cdots q_m$. More information on
SDTA's can be found in~\cite{CLT}.

Given a tree automaton $A=(Q,\Sigma,F,\delta)$, the states in $Q$
are called {\em vertical states\/}. The DFAs recognizing
the horizontal languages are called {\em horizontal
DFAs\/} and their states
 are called horizontal states.
We define the {\em (state) size of $A$,}  ${\rm
size}(A)$, as a pair of integers $[ |Q|, n ]$, where
$n$ is the sum of the sizes of all horizontal DFAs associated
with $A$.

\section{Union and intersection}

We investigate the state complexity of union and intersection
operations on unranked tree automata. The upper bounds on the
numbers of vertical states are similar for SDTAs and WDTAs,
however the upper bounds on the numbers of horizontal states
differ between the two models.

\subsection{Strongly deterministic tree automata}\label{sdta}

The following result gives the upper bounds and the lower bounds
for the operations of union and intersection for SDTAs.

\begin{theorem}\label{xxx}
For any two arbitrary SDTAs $A_i=(Q_i,\Sigma,\delta_i,F_i)$,
$i=1,2$, whose transition function associated with $\sigma$ is
represented by a DFA $H_{\sigma}^{A_i}=(C_{\sigma}^i, Q_i \cup
\Sigma, \gamma_{\sigma}^i, c_{\sigma,0}^i, E_{\sigma}^i)$, we have
\begin{description}
  \item[1] Any SDTA $B_\cup$ recognizing $L(A_1)\cup L(A_2)$ satisfies that $${\rm size}(B_{\cup}) \leq [ \; (|Q_1|+1)\times
(|Q_2|+1)-1; \; \sum_{\sigma \in \Sigma} ((|C_{\sigma}^{1}|+1)
\times (|C_{\sigma}^{2}|+ 1) -1) \; ].$$
  \item[2] Any SDTA $B_\cap$ recognizing $L(A_1)\cap L(A_2)$ satisfies that $${\rm size}(B_{\cap}) \leq [ \; |Q_1|\times
|Q_2|; \; \sum_{\sigma \in \Sigma} |C_{\sigma}^{1}| \times
|C_{\sigma}^{2}| \; ].$$
  \item[3] For integers $m, n\geq 1$ and relatively prime numbers
  $k_1,k_2,\ldots,k_m,k_{m+1},\ldots,\\ k_{m+n}$, there exists
  tree languages $T_1$ and $T_2$ such that $T_1$ and $T_2$, respectively, can be
  recognized by SDTAs with $m$ and $n$ vertical states,
  $\prod_{i=1}^m k_i+O(m)$ and $\prod_{i=1+m}^{m+n} k_i+O(n)$ horizontal states, and
\begin{description}
  \item[i] any SDTA recognizing $T_1\cup T_2$ has at least
 $(m+1)(n+1)-1$ vertical
  states and $\prod_{i=1}^{m+n} k_i$ horizontal states.
  \item[ii] any SDTA recognizing $T_1\cap T_2$ has at least $mn$ vertical
  states and $\prod_{i=1}^{m+n} k_i$ horizontal states.
\end{description}
\end{description}
\end{theorem}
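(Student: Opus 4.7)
The plan is to handle the three parts separately. Parts 1 and 2 are straightforward product constructions tailored to SDTAs, while Part 3 demands a carefully engineered family of witness languages.

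For \emph{intersection} (Part 2) I would take the direct product: vertical states $Q_1\times Q_2$, final set $F_1\times F_2$, and for each $\sigma$ the horizontal DFA $H_\sigma^{A_1}\times H_\sigma^{A_2}$, where a pair $(q_1,q_2)$ read as an input letter lets each component follow its own coordinate, and the output at a product horizontal state $(s_1,s_2)$ is $(\lambda_\sigma^1(s_1),\lambda_\sigma^2(s_2))$. Partial transitions are harmless for intersection, and the state counts match the claim directly.

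For \emph{union} (Part 1) I would first complete each $A_i$ by adjoining a fresh sink vertical state $\bot_i$ and, in each $H_\sigma^{A_i}$, a fresh horizontal sink state that absorbs all previously undefined transitions and whose $\lambda_\sigma^i$-image is $\bot_i$. I would then take the product of the completed automata, declaring a pair $(q_1,q_2)$ final iff $q_1\in F_1$ or $q_2\in F_2$. The doubly dead vertical state $(\bot_1,\bot_2)$, together with the unique horizontal state in each $H_\sigma$ that maps to it, cannot contribute to acceptance anywhere, so both can be removed and the offending transitions simply left undefined. This saves exactly one vertical state and one horizontal state per label $\sigma$, yielding the stated bounds.

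For Part 3 I would work over a small fixed alphabet $\Sigma$ and design $T_1, T_2$ so that acceptance at a $\sigma$-node is governed by modular conditions on its sequence of children, using $k_1,\ldots,k_m$ for $T_1$ and $k_{m+1},\ldots,k_{m+n}$ for $T_2$. A natural template is trees in which each child subtree encodes one of $m$ (respectively $n$) distinguished ``types'' and the number of children of each type is constrained modulo the corresponding $k_i$. This should yield an SDTA for $T_1$ with $m$ vertical states and $\prod_{i=1}^m k_i + O(m)$ horizontal states (the additive term covering transient start-up of the horizontal DFA), and analogously for $T_2$. The vertical lower bounds $(m+1)(n+1)-1$ and $mn$ would then be established by a Myhill--Nerode argument: for each surviving pair in $(\{1,\ldots,m\}\cup\{\bot\})\times(\{1,\ldots,n\}\cup\{\bot\})\setminus\{(\bot,\bot)\}$ (respectively in $\{1,\ldots,m\}\times\{1,\ldots,n\}$) I would exhibit a single-hole context separating it from every other pair. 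For the horizontal lower bound $\prod_{i=1}^{m+n} k_i$ I would invoke the Chinese Remainder Theorem: coprimality of the $k_i$'s keeps all $K_1K_2$ residues pairwise distinguishable in the combined horizontal language, because any two distinct residues disagree modulo some $k_i$ and an appended suffix of children can be chosen to test exactly that coordinate.

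The main obstacle is Part 3: keeping the per-automaton vertical count down to exactly $m$ or $n$ while still forcing the full product lower bound on any SDTA for $T_1\cup T_2$ or $T_1\cap T_2$, and simultaneously preventing any clever quotient from collapsing residues in the combined horizontal DFA. The coprimality hypothesis on the $k_i$'s is precisely what rules out such a collapse via CRT, and verifying that it passes through the product construction is the most delicate step.
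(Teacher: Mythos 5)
Your Parts 1 and 2 coincide with the paper's constructions (Lemmas \ref{unions} and \ref{inters}): a plain product for intersection, and for union a product of the sink-completed automata with the doubly-dead vertical state and the corresponding doubly-dead horizontal state in each $H_\sigma$ deleted, which is exactly how the paper arrives at $(|Q_1|+1)(|Q_2|+1)-1$ and $\sum_\sigma((|C_\sigma^1|+1)(|C_\sigma^2|+1)-1)$. No issues there.

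Part 3 is where the real work lies, and your proposal stops at the point where the difficulty begins. Your template --- a node whose children come in $m$ ``types'' with the count of each type constrained modulo its own $k_i$ --- is not shown to yield an SDTA with only $m$ vertical states, and more importantly you never specify how a one-hole context can \emph{independently} interrogate the $T_1$-index and the $T_2$-index of a candidate subtree; that independence is precisely what is needed to keep all $(m+1)(n+1)-1$ (resp.\ $mn$) classes apart in any SDTA for the union (resp.\ intersection). The paper's witnesses resolve this concretely: all modular counting happens at a single $a$-node with $k$ identical children (so the vertical state of $A_1$ only records \emph{which unique} $k_i$ divides $k$, giving exactly $m$ states and $\prod_{i\le m}k_i+O(m)$ horizontal states), and the index is then encoded in a unary spine above that node --- a prefix $e^{i-1}$ for $T_1$ versus $b^{j-m-1}$ for $T_2$ in the union case, and a string $s\in(b+c)^*$ with $|s|_b$ and $|s|_c$ encoding the two indices in the intersection case (the latter is essential so that one context can test both coordinates simultaneously). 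The explicit fooling set for the vertical bound is $\{a(a^{k_ik_j})\}\cup\{a(a^{k_i})\}\cup\{a(a^{Z+1})\}$ with $Z=\prod k_i$, of size exactly $(m+1)(n+1)$. Two further technical points you gloss over: because the automata are partial, a set of $N$ pairwise distinguishable trees only forces $N-1$ states (one class may be sent to ``undefined''), so the paper's distinguishing sets carry one extra element (e.g.\ the tuple $\{b\}$ or $\{bb\}$ in the horizontal argument) to reach $\prod_{i=1}^{m+n}k_i$ rather than that quantity minus one; and distinguishing horizontal states of an SDTA is not a bare DFA Myhill--Nerode argument but requires embedding the two child-sequences under a common context (the paper's Lemma \ref{clth}). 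Until the languages are pinned down and these points handled, Part 3 is a plausible plan rather than a proof.
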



The upper bounds on vertical and horizontal states are obtained
from product constructions, and Theorem~\ref{xxx} shows that for
the operations of union and intersection on SDTAs the upper bounds
are tight for vertical states and almost tight for horizontal
states.

\subsection{Weakly deterministic automata}\label{wdta}

In this section, the upper bounds on the numbers of vertical and
horizontal states for the operations of union and intersection on
WDTAs are investigated, and followed by matching lower bounds on
the numbers of vertical states.

\begin{lemma}\label{union}
Given two WDTAs $A_i=(Q_i,\Sigma,\delta_i,F_i)$, $i=1,2$, each
horizontal language $\delta_i(q,\sigma)$ is represented by a DFA
$D_{q,\sigma}^{A_i}=(C_{q,\sigma}^i, Q_i \cup \Sigma,
\gamma_{q,\sigma}^i, c_{q,\sigma,0}^i, E_{q,\sigma}^i)$.

The language $L(A_1)\cup L(A_2)$ can be recognized by a WDTA
$B_{\cup}$ with
\begin{eqnarray*} \ & {\rm size}(B_{\cup}) \leq [
\; (|Q_1|+1)\times (|Q_2|+1)-1;
&\ \\
\ & |\Sigma| \times (\sum_{q \in Q_1,p \in
Q_2}|D_{q,\sigma}^{A_1}| \times |D_{p,\sigma}^{A_2}| + \sum_{q\in
Q_1}|D_{q,\sigma}^{A_1}|\times \prod_{p \in
Q_2}|D_{p,\sigma}^{A_2}| + \sum_{p\in Q_2}|D_{p,\sigma}^{A_2}|
&\
\\
\ & \times \prod_{q \in Q_1}|D_{q,\sigma}^{A_1}|) \; ] &
\end{eqnarray*}

The language $L(A_1)\cap L(A_2)$ can be recognized by a WDTA
$B_{\cap}$ with \begin{eqnarray*}\ & {\rm size}(B_{\cap}) \leq [
\; |Q_1|\times |Q_2|; \; |\Sigma| \times \sum_{q \in Q_1,p\in Q_2}
|D_{q,\sigma}^{A_1}| \times |D_{p,\sigma}^{A_2}| \; ].&\
\end{eqnarray*}
\end{lemma}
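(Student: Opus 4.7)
The plan is a product construction for both operations, with $B_\cup$ augmented by ``dead'' states representing a failed subtree computation in one of the component automata. For $B_\cap$, set $Q_\cap = Q_1 \times Q_2$ and $F_\cap = F_1 \times F_2$, and for every $\sigma \in \Sigma$ and every $(q_1,q_2) \in Q_\cap$ take the horizontal DFA to be the Cartesian product of $D_{q_1,\sigma}^{A_1}$ and $D_{q_2,\sigma}^{A_2}$, where a child state $(p_1,p_2)$ is read by the first factor as $p_1$ and by the second as $p_2$, while a leaf symbol $\tau \in \Sigma$ is read as itself by both factors. Summing $|D_{q_1,\sigma}^{A_1}| \cdot |D_{q_2,\sigma}^{A_2}|$ over all pairs and labels gives the stated bound, and a routine induction on tree structure shows that $B_\cap$ labels a subtree with $(q_1,q_2)$ iff $A_i$ labels it with $q_i$ for $i=1,2$; the WDTA determinism condition is inherited from the two factors.

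For $B_\cup$, take $Q_\cup = ((Q_1 \cup \{\bot_1\}) \times (Q_2 \cup \{\bot_2\})) \setminus \{(\bot_1,\bot_2)\}$, where $\bot_i$ indicates that $A_i$ has no defined state on the current subtree, and declare $(q_1,q_2)$ final iff $q_1 \in F_1$ or $q_2 \in F_2$. The horizontal DFA for $(q_1,q_2)$ and $\sigma$ is built by cases. When both $q_i \in Q_i$ it is the product of $D_{q_1,\sigma}^{A_1}$ and $D_{q_2,\sigma}^{A_2}$, of size $|D_{q_1,\sigma}^{A_1}|\cdot|D_{q_2,\sigma}^{A_2}|$, yielding the first sum in the claimed bound. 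When $q_2 = \bot_2$ and $q_1 \in Q_1$, the DFA runs $D_{q_1,\sigma}^{A_1}$ on the first projection while simultaneously verifying that the second projection is rejected by \emph{every} $D_{p,\sigma}^{A_2}$; the latter check is realized by the standard product of all $D_{p,\sigma}^{A_2}$ for $p \in Q_2$, with accepting states being those whose components are uniformly non-accepting, contributing at most $|D_{q_1,\sigma}^{A_1}|\cdot\prod_{p\in Q_2}|D_{p,\sigma}^{A_2}|$ states. The case $q_1 = \bot_1$, $q_2 \in Q_2$ is symmetric and accounts for the third summand.

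Correctness of $B_\cup$ is proved by induction on tree height: the state assigned by $B_\cup$ to a subtree $t$ is exactly $(q_1^*, q_2^*)$, where $q_i^*$ is the state $A_i$ assigns to $t$, or $\bot_i$ if $A_i$ has no defined state on $t$; consequently $B_\cup$ accepts exactly $L(A_1) \cup L(A_2)$. Weak determinism of $B_\cup$ follows because, for fixed $\sigma$, the languages $\delta_i(q,\sigma)$ are pairwise disjoint across $q \in Q_i$, so any projection pair satisfies the horizontal condition for at most one cell of the $(Q_1 \cup \{\bot_1\}) \times (Q_2 \cup \{\bot_2\})$ partition. The main obstacle is the bookkeeping for the dead-state cases: one must argue that the ``uniformly failing'' DFA needs no more than $\prod_{p \in Q_2}|D_{p,\sigma}^{A_2}|$ states before taking the product with the live factor for the other side, and confirm that modular per-state counting is legitimate rather than appealing to a potentially tighter bound that would allow DFA sharing across distinct vertical states.
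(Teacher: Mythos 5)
Your construction is essentially the same as the paper's: a product automaton $Q_1\times Q_2$ for intersection, and for union a product over $(Q_1\cup\{dead\})\times(Q_2\cup\{dead\})$ minus the all-dead pair, where the horizontal DFA for a pair with one dead component runs the live side's DFA in product with the full product of all of the other automaton's horizontal DFAs for that label (accepting when all of them reject), which yields exactly the three summands in the stated bound. The correctness induction, the disjointness argument for weak determinism, and the per-state counting all match the paper's proof, so there is nothing further to add.
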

The theorem below shows that the upper bounds for the vertical
states are tight.

\begin{theorem}\label{dtadfa}
For any two WDTAs $A_1$ and $A_2$ with $m$ and $n$ vertical states
respectively, we have
\begin{itemize}
  \item[1] any WDTA recognizing $L(A_1)\cup L(A_2)$ needs at
  most $(m+1)(n+1)-1$ vertical states,
  \item[2] any WDTA recognizing $L(A_1)\cap L(A_2)$ needs at
  most $mn$ vertical states,
  \item[3] for any integers $m,n\geq 1$, there exist tree languages $T_1$
  and $T_2$ such that $T_1$ and $T_2$ can be recognized by
  WDTAs with $m$ and $n$ vertical states respectively, and any WDTA
  recognizing $T_1 \cup T_2$ has at least $(m+1)(n+1)-1$ vertical
  states, and any WDTA recognizing $T_1 \cap T_2$ has at least $mn$ vertical
  states.
\end{itemize}
\end{theorem}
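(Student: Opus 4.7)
Parts~(1) and~(2) follow directly from Lemma~\ref{union}, so the core task is the lower bound in part~(3). My plan is to construct explicit witness languages $T_1, T_2$ over unary trees and then invoke a Myhill--Nerode style argument for WDTAs: if a family of trees is pairwise distinguishable by contexts relative to a target tree language $L$, then any WDTA recognizing $L$ assigns them pairwise distinct vertical states, since in a WDTA the state at a node is a deterministic function of the subtree rooted there. Because the target upper bound $(m+1)(n+1)-1 = mn+m+n$ strictly exceeds the pure product count $mn$, the construction must arrange for $A_1$ and $A_2$ to have overlapping but strictly distinct sets of processable trees, which is what will produce the extra $m+n$ hybrid equivalence classes corresponding to the ``$(q,\bot)$'' and ``$(\bot,q)$'' states in the product automaton of Lemma~\ref{union}.

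Concretely, I would work over the alphabet $\Sigma = \{a, b, d_1, d_2\}$ and take
\[
  T_1 = \{\, w \in \{a, b, d_1\}^+ : \#_a(w) \equiv 0 \pmod{m}\, \}, \quad
  T_2 = \{\, w \in \{a, b, d_2\}^+ : \#_b(w) \equiv 0 \pmod{n}\, \},
\]
where each word $w$ abbreviates the corresponding unary tree read top-down, in accordance with Section~\ref{pre}. The WDTA $A_1$ uses vertical states $p_0, \ldots, p_{m-1}$ tracking $\#_a$ modulo $m$ and has no transition on the symbol $d_2$; symmetrically $A_2$ has $n$ vertical states $q_0, \ldots, q_{n-1}$ and crashes on $d_1$. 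A routine verification (leaves and unary transitions) shows that all $m$ vertical states of $A_1$ and all $n$ of $A_2$ are reachable. The crucial structural feature for the subsequent lower bound is that both $A_1$ and $A_2$ process every tree in $\{a,b\}^+$ (yielding the $mn$ product witnesses), while only $A_1$ processes trees containing $d_1$ and only $A_2$ processes trees containing $d_2$.

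For the intersection lower bound I take the $mn$ trees $a^r b^s$ with $0 \le r < m$, $0 \le s < n$ (using $a^m$ in place of the empty case $r = s = 0$); the contexts $a^{m-r} b^{n-s}$ separate them pairwise, forcing at least $mn$ vertical states in any WDTA for $T_1 \cap T_2$. For the union lower bound I adjoin $m$ trees $a^r d_1$ and $n$ trees $b^s d_2$, giving $mn + m + n = (m+1)(n+1)-1$ witnesses in total. The main technical obstacle is verifying cross-category distinguishability for the union bound: the key lever is that inserting $d_2$ (respectively $d_1$) into a context silences $A_1$ (resp.\ $A_2$) while keeping the other automaton alive, so suitably chosen contexts such as $a^{m-r} b^{n-s}$, $d_2 b^{n-s}$, and $d_1 a^{m-r}$ (with minor adjustments in the degenerate cases $m=1$ or $n=1$, where a short context like $d_2$ on its own already separates $d_1$ from $d_2$) separate each pair of witnesses drawn from the three categories. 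Once these separations are checked the witness count matches the upper bound of Lemma~\ref{union} exactly, completing part~(3).
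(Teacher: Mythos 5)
Your overall strategy is sound and matches the paper's at the level of ideas --- both proofs are fooling-set arguments exhibiting $mn$ ``product'' witnesses plus $m+n$ ``one-sided'' witnesses on which exactly one of the two automata survives --- but your witness languages are genuinely different. The paper uses branching trees $b^i(a(y_i\,a\,x_j))$ and $c^j(a(y_i\,a\,x_j))$, in which the pair $(i,j)$ is encoded in binary as children of an $a$-labelled node and the unary spine above checks one coordinate, whereas you lift the classical incomplete-DFA construction (mod-$m$ and mod-$n$ counters with blocking identifier symbols $d_1,d_2$) to unary trees. Your route is closer to the string-language literature and makes the count $(m+1)(n+1)-1=mn+m+n$ transparent; the paper's route avoids unary trees, which sidesteps the leaf-handling subtleties noted below.

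Two technical points need attention before your argument closes. First, your distinguishability principle is stated too strongly for incomplete WDTAs: two trees on which the root state is \emph{undefined} are indistinguishable by every context, so a pairwise-distinguishable set $R$ only forces $|R|-1$ vertical states in general (this is exactly the paper's Lemma~\ref{cltv}, and the paper compensates by adding an extra ``dead'' witness such as $a(0aa0)$). You have exactly $(m+1)(n+1)-1$ witnesses for a bound of $(m+1)(n+1)-1$, so there is no slack; you must additionally observe that each of your witnesses embeds into an accepted tree and hence receives a defined root state. This holds for your construction but has to be said. Second, several of your witnesses ($a$, $b$, $d_1$, $d_2$ in the boundary cases) are single-node trees, and under the paper's conventions a leaf occurring inside a larger tree may be absorbed directly as a symbol of the parent's horizontal language without ever being assigned a vertical state, which would invalidate the counting for those witnesses; padding every witness with a fixed dummy leaf at the bottom (as the paper does in its concatenation construction) avoids this. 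Finally, your context $a^{m-r}b^{n-s}$ does not separate two product witnesses for the \emph{union} when they agree in one coordinate (both images then lie in $T_1\cup T_2$); your $d_1$- and $d_2$-prefixed contexts do the job and should be the ones used for all product pairs in the union argument.
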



\begin{problem}
Are the upper bounds for the numbers of horizontal states given in
Lemma~\ref{union} tight?
\end{problem}
In the case of WDTAs we do not have
a general method 
to establish lower bounds on the number of the horizontal states.
It remains an open question to give (reasonably) tight lower
bounds on the number of horizontal states needed to recognize the
union or intersection of tree languages recognized by two WDTA's.
\section{Concatenation of strongly deterministic tree automata}\label{con}

We begin by  giving a construction of an SDTA recognizing the
concatenation of two tree languages recognized by given
 SDTAs.

\begin{lemma}\label{cons}
Let $A_1$ and $A_2$ be two arbitrary SDTAs.
$A_i=(Q_i,\Sigma,\delta_i,F_i)$, $i=1,2$, transition function for
each $\sigma\in\Sigma$ is represented by a DFA
$H_{\sigma}^{A_i}=(C_{\sigma}^i, Q_i \cup \Sigma,
\gamma_{\sigma}^i, c_{\sigma,0}^i, E_{\sigma}^i)$ with an output
function $\lambda_\sigma^i$.

The language $L(A_2)\cdot L(A_1)$ can be recognized by an SDTA $B$
with
$${\rm size}(B ) \leq [ \; (|Q_1| + 1)\times (2^{|Q_1|}
\times (|Q_2|+1)-2^{|Q_1|-1})-1; \; |\Sigma| (|C_{\sigma}^2|+1)
(|C_{\sigma}^1|+1)\times 2^{|C_{\sigma}^1| + 1} \; ].$$
\end{lemma}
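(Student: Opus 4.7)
The plan is to build $B$ via a bottom-up construction in which, at every node $v$, $B$'s vertical state simultaneously records two hypothetical scenarios: (i) the concatenation point $u$ of $t'$ lies outside the subtree rooted at $v$, for which we only need the state that $A_1$ would assign to $v$ when the subtree is processed as given; and (ii) $u$ lies in the subtree rooted at $v$, for which we need either the $A_2$-state currently reached inside the inserted $A_2$-tree, or, once that tree has been completed, the set of $A_1$-states that could arise at $v$ after ``restoring'' the inserted leaf to its original label. Accordingly I take a vertical state of $B$ to be a pair $(p, (q_2, S))$ with $p \in Q_1 \cup \{\bot\}$, $q_2 \in Q_2 \cup \{\ast\}$ (the marker $\ast$ meaning ``the $A_2$-tree has already accepted strictly inside the current subtree''), and $S \subseteq Q_1$; a state is declared final at the root when the scenario-(ii) component witnesses membership in $L(A_2)\cdot L(A_1)$.

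For each label $\sigma$ I would then build the horizontal DFA $H_\sigma^B$ together with its output function as a product of two essentially independent DFAs. The first factor runs $H_\sigma^{A_1}$ on the first coordinates of the children with one extra ``dead'' state (contributing the factor $|C_\sigma^1|+1$). The second factor implements the tree analogue of the standard $m2^n-2^{n-1}$ construction for string concatenation of DFAs: while the current child still carries a scenario-(ii) $A_2$-component it follows $H_\sigma^{A_2}$ augmented by a ``done'' state (contributing $|C_\sigma^2|+1$), and in parallel it maintains a subset over $C_\sigma^1$ enlarged by one element (contributing $2^{|C_\sigma^1|+1}$) that accumulates the horizontal $A_1$-states corresponding to the various child positions at which the concatenation path could sit. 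The output function is then induced by $\lambda_\sigma^{A_1}$ and $\lambda_\sigma^{A_2}$. Correctness is established by induction on the height of the subtree at $v$, showing that the state of $B$ at $v$ records precisely the information described above; acceptance at the root then matches exactly the condition that the input equals $t'(u \leftarrow t)$ for some $t \in L(A_2)$, $t' \in L(A_1)$ and $u \in \mathrm{leaf}(t')$.

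The state counts follow by a direct tally: scenario (i) gives the factor $|Q_1|+1$, scenario (ii) gives $(|Q_2|+1)2^{|Q_1|}$ raw pairs $(q_2, S)$, and exactly as in the classical string case those pairs with $q_2 \in F_2$ and $S$ failing to contain the forced ``restored-leaf'' $A_1$-states are unreachable, shaving $2^{|Q_1|-1}$ per value of $p$; one additional globally unreachable pair accounts for the final $-1$, yielding the bound $(|Q_1|+1)\bigl((|Q_2|+1)2^{|Q_1|}-2^{|Q_1|-1}\bigr)-1$ on vertical states. The horizontal count is a direct product count over the three factors above. The main obstacle is the bookkeeping inside $H_\sigma^B$: it has to stay deterministic as the SDTA model demands while implicitly guessing which child carries the concatenation path, and, more subtly, the $-2^{|Q_1|-1}$ unreachability argument from the string case must be carefully re-derived in the tree setting, where the ``restored-leaf $A_1$-states'' themselves depend on the unspecified original label of the concatenation leaf.
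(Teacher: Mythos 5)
Your proposal follows essentially the same route as the paper's proof: a three-component vertical state $(p_1,P_2,q)$ tracking the unconcatenated $A_1$-run, the subset of post-concatenation $A_1$-states, and the $A_2$-run, together with product horizontal DFAs whose subset-over-$C_\sigma^1$ component (your ``enlarged by one element'' playing the role of the paper's extra flag bit in ${\cal P}(C_\sigma^1)\times\{0,1\}$) deterministically absorbs the guess of which child carries the concatenation. The state tallies, including the $-2^{|Q_1|-1}$ adjustment for forced membership of the leaf-state when the $A_2$-component is accepting and the final $-1$ for the all-dead state, match the paper's accounting.
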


\begin{proof}
Choose $B = (Q_1' \times Q_1'' \times Q_2', \Sigma, \delta, F)$,
where $Q_1'= Q_1 \cup \{dead\} $, $Q_1''= {\cal P} (Q_1) $, $Q_2'
= Q_2 \cup \{ dead\}$. Let $P_2 \subseteq Q_1 $. $( p_1 , P_2
,q)\in Q_1' \times Q_1'' \times Q_2'$ is final if there exists
$p\in P_2$ such that $p\in F_1$.

The transition function $\delta$ associated with each $\sigma$ is
represented by a DFA $H_{\sigma}^{B}=(  S \times S'' \times S',
(Q_1' \times Q_1'' \times Q_2') \cup \Sigma, \mu, ( c_{\sigma,0}^1
, ( \{ c_{\sigma,0}^1 \} , 0) ,c_{\sigma,0}^2), V)$ with an output
function $\lambda_\sigma^{B}$, where $S = C_{\sigma}^1 \cup
\{dead\} $, $S'' =  {\cal P} ( C_{\sigma}^1 ) \times \{ 0 ,1 \} $,
$S' = C_{\sigma}^2 \cup \{dead\} $. Let $C_2\subseteq
C_{\sigma}^1$, $x=1,0$. $( c_1 ,( C_2 , x ) , c^2) \in S \times
S'' \times S'$ is final if $c^2 \in E_{\sigma}^2$ or there exists
$c \in c_1 \cup C_2$ such that $c \in E_{\sigma}^1$. $\mu$ is
defined as below:

For any input $a\in \Sigma$,
$$
\mu(( c_1 , ( C_2 , x) ,c^2), a) = ( \gamma_\sigma^1(c_1, a),
(\bigcup_{c_2\in C_2} \gamma_\sigma^1(c_2, a), x),
\gamma_\sigma^2(c^2, a))
$$
For any input $( p_1 , P_2  ,q)\in Q_1' \times Q_1'' \times Q_2'
$, if $P_2 \neq \emptyset$,
$$
\mu(( c_1 , ( C_2 , 0) , c^2), ( p_1 , P_2  ,q) ) = (
\gamma_\sigma^1(c_1, p_1), (\bigcup_{p_2\in P_2}
\gamma_\sigma^1(c_1, p_2) , 1), \gamma_\sigma^2(c^2, q))
$$
$$
\mu((  c_1 , ( C_2 , 1) , c^2), ( p_1 , P_2  ,q) ) = (
\gamma_\sigma^1(c_1, p_1), ( \bigcup_{p_2\in
P_2}\gamma_\sigma^1(c_1, p_2) \cup \bigcup_{c_2\in C_2}
\gamma_\sigma^1(c_2, p_1) , 1), \gamma_\sigma^2(c^2, q))
$$
if $P_2 = \emptyset$,
$$
\mu(( c_1 , ( C_2 , 0) , c^2), ( p_1 , \emptyset  ,q) ) = (
\gamma_\sigma^1(c_1, p_1), ( \emptyset , 0), \gamma_\sigma^2(c^2,
q))
$$
$$
\mu((  c_1 , ( C_2 , 1) , c^2), ( p_1 , \emptyset  ,q) ) = (
\gamma_\sigma^1(c_1, p_1), (\bigcup_{c_2\in C_2}
\gamma_\sigma^1(c_2, p_1) , 1), \gamma_\sigma^2(c^2, q))
$$
Write the computation above in an abbreviated form as $\mu((  c_1
, ( C_2 , x) , c^2), r ) = ( p_1' , P_2'  ,q' )$, $r\in \Sigma\cup
Q_1'\times Q_1''\times Q_2'$. When compute $p_1'$ and $q'$, if any
$\gamma_\sigma^i (c, \alpha)$, $i=1,2$, $c=c_1,c^2$, $\alpha \in
\Sigma \cup Q_i $, is not defined in $A_i$, assign $dead$ to
$p_1'$ or $q'$. When compute $P_2'$, add nothing to $P_2'$ if any
$\gamma_\sigma^i (c, \alpha)$ is not defined.

Let $p_{leaf}\in Q_1$ denote the state assigned to the leaf in
$A_1$ substituted by a tree in $L(A_2)$. $\lambda_\sigma^{B}$ is
defined as: for any final state $e = ( c_1 , ( C_2 , x) ,c^2)$,
$x_1 = c_1 \cap E_\sigma^1$, $X_2 = C_2 \cap E_\sigma^1$,
\begin{itemize}
  \item[1] If $c^2 \in E_\sigma^2$
$$
\lambda_\sigma^{B}(e) = \left\{
\begin{array}{l}
 ( \lambda_\sigma^1 (x_1) ,  p_{leaf} \cup \bigcup_{x_2\in X_2}\lambda_\sigma^1 (x_2) , \lambda_\sigma^2 (c^2) ), \mbox{ if }
\lambda_\sigma^2 (c^2) \in F_2 \mbox{ and } x=1
 \\
  ( \lambda_\sigma^1 (x_1) ,  p_{leaf} , \lambda_\sigma^2 (c^2) ),
\mbox{ if } \lambda_\sigma^2 (c^2) \in F_2 \mbox{ and } x=0
 \\
(\lambda_\sigma^1 (x_1) , \bigcup_{x_2\in X_2} \lambda_\sigma^1
(x_2) ,\lambda_\sigma^2 (c^2) ), \mbox{ if } \lambda_\sigma^2
(c^2) \notin F_2 \mbox{ and } x=1
 \\
 ( \lambda_\sigma^1 (x_1) , \emptyset , \lambda_\sigma^2 (c^2) ), \mbox{
if } \lambda_\sigma^2 (c^2) \notin F_2 \mbox{ and } x=0
\end{array} \right.
$$
\item[2] If 
$c^2 \notin E_{\sigma}^2$,
$$
\lambda_\sigma^{B}(e) = \left\{
\begin{array}{l}
( \lambda_\sigma^1 (x_1) , \emptyset , dead ) \mbox{ if } x = 0
 \\
( \lambda_\sigma^1 (x_1) , \bigcup_{x_2\in X_2} \lambda_\sigma^1
(x_2) ,dead ) \mbox{ if } x=1
\end{array} \right.
$$
If $x_1 = \emptyset$, define $ \lambda_\sigma^1 (x_1) = dead $. If
$X_2 = \emptyset$, define $\bigcup_{x_2\in X_2} \lambda_\sigma^1
(x_2) = \emptyset$.
\end{itemize}
The state in $B$ has three components $( p_1 , P_2 ,q)$. $p_1$ is
used to keep track of $A_1$'s computation where no concatenation
is done. $p_1$ is computed by the first component $c_1$ in the
state of $H_{\sigma}^{B}$. $P_2$ traces the computation where the
concatenation takes place. In a state $( c_1 , ( C_2 , x) ,c^2)$
of $H_{\sigma}^{B}$, $x=1$ (or $x=0$) records there is (or is not)
a concatenation in the computation. The third component $q$ keeps
track of the computation of $A_2$. When a final state is reached
in $A_2$, which means a concatenation might take place, an initial
state $p_{leaf}$ is added to $P_2$, which is achieved by the
$\lambda_\sigma^{B}$ function in $B$.

According to the definition of $\lambda_\sigma^{B}$, when
$\lambda_\sigma^2 (c^2) \in F_2$, $p_{leaf}$ is always in the
second component of the state. Exclude the cases when
$\lambda_\sigma^2 (c^2) \in F_2$, and $p_{leaf}$ is not in the
second component of the state, and we do not require $B$ be
complete. $B$ has $(|Q_1| + 1)\times (2^{|Q_1|} \times
(|Q_2|+1)-2^{|Q_1|-1})-1$ vertical states in worst case.
\end{proof}
\endpf

Lemma~\ref{cons} gives an upper bound on both the numbers of
vertical and horizontal states recognizing the concatenation of
$L(A_2)$ and $L(A_1)$. In the following we give a matching lower
bound for the number of vertical states of any SDTA recognizing
$L(A_2) \cdot L(A_1)$.




For our lower bound construction we define tree languages
consisting of trees where, roughly speaking, each branch
belongs to the worst-case languages used for string concatenation
in~\cite{YuZhSa94} and, furthermore, the minimal DFA
reaches the same state at an arbitrary node $u$ in computations starting
from any two leaves below $u$. For technical reasons, all leaves
of the trees are labeled by a fixed symbol and the strings used to
define the tree language do not include the leaf symbols.

As shown in
Figure~\ref{f:dfa}, $A$ and $B$ are the DFAs used in Theorem~1 of
\cite{YuZhSa94} except that a self-loop labeled by
an additional symbol $d$ is added to
each state in $B$.
We use the symbol $d$ as an identifier of DFA $B$,
which always leads  to a dead state in the computations of $A$.
This will be useful for establishing that all vertical states
of the SDTA constructed as in Lemma~\ref{cons} are needed
to recognize the concatenation of tree languages defined below.

\begin{figure}
\centering
\includegraphics[height=3cm]{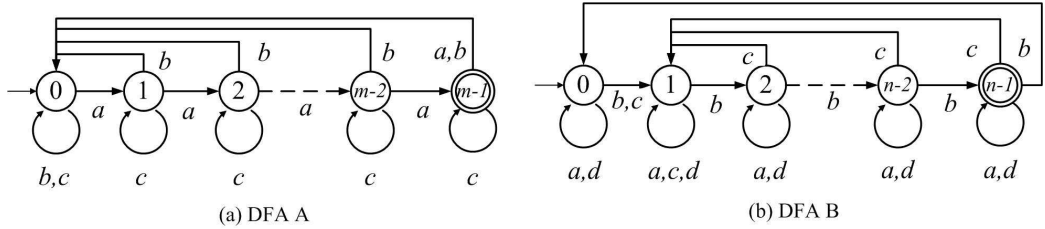}
\caption{DFA $A$ and $B$}\label{f:dfa}
\end{figure}

Based on the DFAs $A$ and $B$ we define the tree languages
$T_A$ and $T_B$ used in our lower bound construction.
The tree language $T_B$ consists of $\Sigma$-labeled trees $t$,
$\Sigma = \{ a, b, c, d \}$, where:
\begin{enumerate}
\item All leaves are labeled
by $a$ and if a node $u$ has a child that is a leaf, then all the
children of $u$ are leaves.
\item $B$ accepts the string of symbols labeling
a path from any node of height one to the root.
\item The following
holds for any $u \in {\rm dom}(t)$ and any nodes
 $v_1$ and $v_2$
 of height one below $u$. If $w_i$ is the string
of symbols labeling the path from $v_i$ to $u$, $i = 1, 2$, then
$B$ reaches the same state after reading strings $w_1$ and $w_2$.
\end{enumerate}
Intuitively, the above condition means that when, on a tree of
$T_B$, the DFA $B$ reads strings of symbols labeling paths
starting from nodes of height one upwards, the computations
corresponding to different paths ``agree'' at each node. This
property is used in the construction of an SDTA $M_B$ for $T_B$
below.

Note that the computations of $B$ above are started from the nodes
of height one and they ignore the leaf symbols. This is done for
technical reasons because in tree concatenation a leaf symbol is
replaced by a tree, i.e., the original symbol labeling the leaf
will not appear in the resulting tree.

$T_B$ can be recognized by an SDTA $M_B=(Q_B,\{a,b,c,d\},
\delta_B, F_B)$ where $Q_B=\{0,1,\ldots,n-1\}$ and $F_B=\{n-1\}$.
The transition function is defined as:
\begin{itemize}
  \item[(1)] $\delta_B(0,a)=\epsilon$,
  \item[(2)] $\delta_B(i,a)=\bigcup_{0 \leq i \leq n-1} i^+$,
  \item[(3)] $\delta_B(i,d)=\bigcup_{0 \leq i \leq n-1} i^+$,
  \item[(4)] $\delta_B(j, b) = (j-1)^+, 1 \leq j \leq n-1$ and
  $\delta_B(0, b) = (n-1)^+$,
  \item[(5)] $\delta_B(1,c)=\{0, \ldots, n-1\}^+$.
\end{itemize}

The tree language $T_A$ and an SDTA $M_A$ recognizing it are
defined similarly based on the DFA $A$. Note that $T_A$ has no
occurrences of the symbol $d$ and $M_A$ has no transitions defined
on $d$.
The SDTAs $M_A$ and $M_B$ have $m$ and $n$
vertical states, respectively.

An SDTA $C$ recognizing tree language $T_A \cdot T_B$\ \
\footnote{Recall from section~\ref{pre} that $T_A \cdot T_B$
consists of trees where in some tree of $T_B$ a leaf is replaced
by a tree of $T_A$.} is obtained from $M_A$ and $M_B$ using the
construction given in Lemma~\ref{cons}. The vertical states in $C$
are of the following form
\begin{equation}\label{state1}
(q,S,p), 0\leq q\leq n, S\subseteq \{0,1,\ldots,n-1\}, 0\leq p\leq
m,
\end{equation} where if $p=m-1$ then $0\in S$, and if $S=\emptyset$
then $q=n$ and $p=m$ can not both be true. The number of states
in~(\ref{state1}) is $(n+1) ((m+1)2^n-2^{n-1})-1$. State $q=n$ (or
$p=m$) denotes $q=dead$ (or $p=dead$) in the construction of
lemma~\ref{cons}. We will show that $C$ needs at least $(n+1)
((m+1)2^n-2^{n-1})-1$ vertical states. We prove this by showing
that each state in~(\ref{state1}) is reachable and all states are
pairwise inequivalent, or distinguishable. Here distinguishability
means that for any distinct states $q_1$ and $q_2$ there exists $t
\in T_\Sigma[x]$ such that the (unique deterministic) computation
of $C$ on $t(x \leftarrow q_1)$ leads to acceptance if and only if
the computation of $C$ on $t(x \leftarrow q_2)$ does not lead to
acceptance.

\begin{lemma}
All states of $C$ are reachable.
\label{reach}
\end{lemma}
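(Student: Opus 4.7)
The plan is to exhibit, for each triple $(q,S,p)$ satisfying the constraints in~(\ref{state1}), an explicit tree over $\Sigma=\{a,b,c,d\}$ whose evaluation in $C$ reaches that state at the root. A direct calculation from $\mu$ and $\lambda_\sigma^B$ gives the elementary moves. The leaf $a$ is in state $(0,\emptyset,0)$. A unary $a$-parent acts as the identity on $(q,S,p)$ but inserts $0$ into $S$ whenever $p=m-1$. A unary $b$-parent sends $(q,S,p)$ to $(q{+}1\bmod n,\{s{+}1\bmod n:s\in S\},p{+}1\bmod m)$ with the same $0$-insertion rule triggered whenever the resulting $p$ equals $m-1$. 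A unary $c$-parent collapses everything to $(1,\{1\},1)$ when $S\ne\emptyset$ and to $(1,\emptyset,1)$ otherwise. Finally, a unary $d$-parent preserves $(q,S)$ but kills the $M_A$-component, sending $p$ to the dead value $m$, because $M_A$ has no $d$-transitions while $M_B$ has $d$ acting as the identity. These four moves already cover every $(q,\emptyset,p)$ with $p\ne m-1$, and, composed further, every $(q,\{s\},p)$.

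The key engine for states with $|S|\ge 2$ is the trunk $b^k(a)$. Tracing the state by induction on $k$, the $M_A$-component visits its final state $m-1$ every $m$ steps, and at each such visit the $\lambda$-rule injects a fresh $0$ into $S$; subsequent $b$'s shift each injected element cyclically in $\mathbb{Z}_n$. Hence after $jm-1$ applications of $b$ the $S$-component equals $\{im\bmod n:0\le i<j\}$; when $\gcd(n,m)=1$ the trunk $b^{nm-1}(a)$ already realises the full set $\{0,\dots,n-1\}$. In general one interleaves these trunks with $c$-resets, which change the starting phase of the trunk, and combines several such subtrees as children of a multi-child $a$-parent, arranging the child first-components and the individual $S_i$ so that the engineered $C_2$-computation contributes exactly the additional elements needed. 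The remaining dead-state cases are immediate: a $d$-node at the root yields $p=m$, and an $a$- or $b$-parent with two children of distinct first components yields $q=n$; the single excluded triple $(n,\emptyset,m)$ cannot arise from a non-empty tree because any leaf already populates the first and third coordinates with $0$.

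The main obstacle is verifying that the trunks and their combinations actually enumerate all $2^n$ subsets of $\{0,\dots,n-1\}$. Multi-child $a$- and $b$-parents tend to collapse most of the substitution information of their children, because the horizontal languages $i^+$ kill non-uniform substitution paths, so $S$ cannot be built by simple unions across subtrees. The proof therefore rests on the fine interaction between the period $n$ of $M_B$ and the period $m$ of $M_A$ under $b$, and I would organise it as an induction on $|S|$ together with a case analysis on $\gcd(n,m)$: the induction step installs each missing element by placing a trunk subtree at a specific child position of a carefully chosen multi-child parent so that the engineered $C_2$-computation admits precisely one additional accepting substitution path beyond those produced by the other children.
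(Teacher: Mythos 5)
There is a genuine gap, and it originates in a misreading of the transition dynamics of $M_A$ and $M_B$. In the witness languages, $a$ cycles the $M_A$-component (third coordinate) and fixes the $M_B$-components, while $b$ cycles the $M_B$-components ($q$ and the elements of $S$) and sends the $M_A$-component back to its initial state; $c$ resets the $M_B$-components to $1$ but does not touch the third coordinate. Your elementary moves get this backwards (you have a unary $a$ acting as the identity on $p$ and a unary $b$ sending $p$ to $p+1 \bmod m$), and your ``key engine'' inherits the error: along a pure trunk $b^k(a)$ the $M_A$-component is stuck at $0$ after the first step, so it never revisits the final state $m-1$ and no element is ever injected into $S$. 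Consequently the claim that $b^{jm-1}(a)$ realises $\{im \bmod n : 0 \le i < j\}$ is false, and the entire $\gcd(m,n)$ case analysis, the interleaving with $c$-resets, and the combination of trunks under multi-child parents are attacking a problem that does not exist while leaving the real one unsolved. You correctly identify that enumerating all $2^n$ subsets is the crux, but you never exhibit a mechanism that does it.

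The missing idea is a shift-and-inject step that needs only unary trees. An element is injected into $S$ exactly when the third coordinate reaches $m-1$ (then $p_{\rm leaf}=0$ enters $S$), and reading $b$'s afterwards translates every element of $S$ by the same amount modulo $n$ while resetting the third coordinate, so no spurious injections occur. To reach $(x,S',y)$ with $S'=\{s_1>\dots>s_{z+1}\}$, one first reaches (by induction on $|S|$) the state whose second component is the difference set $\{s_1-s_{z+1},\dots,s_z-s_{z+1}\}$ with third component $0$ and suitably pre-shifted first component, then reads $a^{m-1}$ to inject $0$, then reads $b^{s_{z+1}}$ followed by $a^y$ (or $d$ for $y=m$) to translate $S$ onto $S'$ and set the other coordinates; the case $y=m-1$ is handled separately since it forces $0\in S'$. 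Multi-child nodes are needed only to reach the dead first component $q=n$ (two subtrees evaluating to different $M_B$-states), and $d$ only to reach the dead third component $p=m$ — those parts of your proposal are essentially right. Without the shift-and-inject construction, however, the proof of the central claim is absent.
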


\begin{proof}
We introduce the following notation. For a unary tree
\\$t=a_1(a_2(\ldots a_m(b)\ldots))$, we denote
$word(t)=a_ma_{m-1}\ldots a_1 \in \Sigma^*$. Note that $word(t)$
consists of the sequence of labels of $t$ from the node of height
one to the root, and the label of the leaf is not included.

We show that all the states in~(\ref{state1}) are reachable
by using induction on $|S|$.

When $|S|=0$, $(i, \emptyset, j)$, $0\leq i\leq n-1$, $0\leq j\leq
m-2$ is reachable from $(0,\emptyset,0)$ by reading tree $t$ where
$word(t)=b^ia^j$. State $(n, \emptyset, j)$, $1\leq j\leq m-2$ is
reachable from $(0,\emptyset,0)$ by reading tree $a(t_1,t_2)$
where $word(t_1)=ba^{j-1}$ and $word(t_2)=b^2a^{j-1}$. State
$(n,\emptyset,0)$ is reachable by reading symbol $b$ from state
$(n, \emptyset, j)$, $1\leq j\leq m-2$. State $(i, \emptyset, m)$,
$0\leq i\leq n-1$ is reachable from $(0,\emptyset,0)$ by reading
tree $b(t_1,t_2)$ where $word(t_1)=b^{i-1}a$ and
$word(t_2)=b^{i-1}a^2$.

When $|S|=1$, $(i, \{0\}, m-1)$, $0\leq i\leq n-1$ is reachable
from $(0,\emptyset,0)$ by reading tree $t$ where
$word(t)=b^ia^{m-1}$.

State $(n, \{0\}, m-1)$, is reachable from $(0,\emptyset,0)$ by
reading tree $a(t_1,t_2)$ where $word(t_1)=ba^{m-2}$ and
$word(t_2)=b^2a^{m-2}$.

State $(i, \{0\}, j)$, $0\leq i\leq n$, $0\leq j\leq m-2$ is
reachable from $(i, \{0\}, m-1)$ by reading a sequence of unary
symbol $a^{1+j}$.

State $(i, \{0\}, m)$, $0\leq i\leq n-1$ is reachable from
$(0,\emptyset,0)$ by reading tree $t$ where
$word(t)=b^{i}a^{m-1}d$.

From $(0,\emptyset,0)$ by reading subtree $b(b(a),b(b(a)))$, state
$(n,\emptyset,0)$ is reached. State $(n, \{0\}, m)$ is reached
from $(n,\emptyset,0)$ by reading a sequence of unary symbols
$a^{m-1}d$.

That is all the states $(i, \{0\}, j)$, $0\leq i\leq n$, $0\leq
j\leq m$ are reachable.

Then state $(i, \{k\}, j)$, $0\leq i\leq n-1$, $0\leq j\leq m-1$,
$1\leq k\leq n-1$ is reachable from $(\overline{i-1},\{k-1\},j)$
by reading a sequence of unary symbols $ba^j$. For any integer
$x$,
$$
\overline{x} = \left\{
\begin{array}{l}
x \mbox{ if } x \geq 0
 \\
n+x \mbox{ if } x< 0
\end{array} \right.
$$
State $(n, \{k\}, j)$, $0\leq j\leq m-1$, $1\leq k\leq n-1$ is
reachable from $(n,\{k-1\},j)$ by reading a sequence of unary
symbols $ba^j$. State $(i, \{k\}, m)$, $0\leq i\leq n-1$, $1\leq
k\leq n-1$ is reachable from $(\overline{i-1},\{k-1\},m)$ by
reading a unary symbol $b$. State $(n, \{k\}, m)$, $1\leq k\leq
n-1$ is reachable from $(n,\{k-1\},m)$ by reading a unary symbol
$b$.

That is all the states $(i, \{k\}, j)$, $0\leq i\leq n$, $0\leq
j\leq m$, $0\leq k\leq n-1$ are reachable.

Now assume that for $|S|\leq z$, all the states $(i, S, j)$,
$0\leq i\leq n$, $0\leq j\leq m$, $S\subseteq \{0,\ldots,n-1\}$
are reachable. And this is the inductive assumption.

We will show that any state $(x, S', y)$, $0\leq x\leq n$, $0\leq
y\leq m$, $|S'|=z+1$ is reachable.

First consider the case where $y\neq m-1$. Let $s_1 > s_2 > \ldots
> s_z > s_{z+1}$ be the elements in $S'$. Let $P= \{s_1-s_{z+1} ,
s_2-s_{z+1} , \ldots , s_z-s_{z+1}\}$.

When $0\leq x\leq n-1$, according to the inductive assumption,
state $(\overline{x-s_{z+1}}, P , 0)$, is reachable. Then state
$(\overline{x-s_{z+1}}, P\cup \{0\}, m-1)$ is reachable from
$(\overline{x-s_{z+1}}, P, 0)$ by reading a sequence of unary
symbols $a^{m-1}$. State $(x, S', y)$, $0\leq y\leq m-2$ is
reachable from $(\overline{x-s_{z+1}}, P\cup \{0\}, m-1)$ by
reading a sequence of unary symbols $b^{s_{z+1}}a^{y}$. State $(x,
S', m)$ is reachable from $(\overline{x-s_{z+1}}, P\cup \{0\},
m-1)$ by reading a sequence of unary symbols $b^{s_{z+1}}d$.

When $x=n$, according to the inductive assumption, state $(n, P ,
0)$, is reachable. Then state $(n, P\cup \{0\}, m-1)$ is reachable
from $(n, P, 0)$ by reading a sequence of unary symbols $a^{m-1}$.
$(n, S', y)$, $0\leq y\leq m-2$ is reachable from $(n, P\cup
\{0\}, m-1)$ by reading a sequence of unary symbols
$b^{s_{z+1}}a^{y}$. State $(n, S', m)$ is reachable from $(n,
P\cup \{0\}, m-1)$ by reading a sequence of unary symbols
$b^{s_{z+1}}d$.

Now consider the case when $y=m-1$. According to the definition of
(\ref{state1}), $0\in S'$. According to the inductive assumption,
state $(x, S'-\{0\}, m-2)$ is reachable. Then state $(x, S', m-1)$
is reachable by reading a unary symbol $a$.

Since $(x, S', y)$ is an arbitrary state with $|S'|=z+1$, we have
proved that all the states $(x, S', y)$, $0\leq x\leq n$, $0\leq
y\leq m$, $|S'|=z+1$ is reachable.

Thus, all the states in (\ref{state1}) are reachable.
\end{proof}\endpf

\begin{lemma}
All states of $C$ are pairwise inequivalent. \footnote{Proof
omitted due to length restriction.} \label{inequi}
\end{lemma}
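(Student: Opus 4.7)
The plan is to exhibit, for every pair of distinct states $s=(q,S,p) \neq s'=(q',S',p')$ of the form~(\ref{state1}), a context $t \in T_\Sigma[x]$ on which $C$ accepts exactly one of $t(x\leftarrow s)$ and $t(x\leftarrow s')$. Because $C$'s acceptance condition is ``$n-1$ lies in the middle coordinate'', every distinguishing context must be arranged so that, at the root, $n-1$ belongs to the middle coordinate of exactly one computation. I would split into three cases according to which coordinate first differs, and in each case lift the pairwise inequivalence of the states of the Yu--Zhuang--Salomaa DFAs $A$ and $B$ underlying $M_A$ and $M_B$ to the tree setting.

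\emph{Case~1: $p \neq p'$.} Under a unary context the third coordinate evolves exactly as $A$ acts on $Q_A$. First I would pick $u \in \{a,b\}^*$ with $A(p,u)=m-1$ and $A(p',u)\neq m-1$ and apply $u$ as unary symbols above $x$; crossing the final state of $A$ triggers $\lambda_\sigma^{B}$ to insert $p_{leaf}=0$ into the middle coordinate in exactly one of the two computations. Then I would append unary $B$-symbols driving $0\to n-1$ in $B$ while preventing the third coordinate from firing $\lambda_\sigma^{B}$ a second time.

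\emph{Case~2: $p=p'$, $q \neq q'$.} A purely unary context cannot help here, since $q$ alone never feeds the middle coordinate. I would instead place $x$ as one child of a $\sigma$-labelled parent, pairing it with a carefully chosen fixed sibling subtree, so that when the parent's horizontal DFA reads the state-label carrying $q$, its first horizontal component lands in different $\lambda_\sigma^{B}$-classes for $q$ and $q'$. The parent's resulting vertical $p$-coordinate then differs, reducing the situation to Case~1.

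\emph{Case~3: $q=q'$, $p=p'$, $S \neq S'$.} This is the main obstacle. I would take $r$ in the symmetric difference $S \triangle S'$ and, using the subset-distinguishability built into $B$, pick a word $w_r \in \{a,b,c\}^*$ with $B(r,w_r)=n-1$ and $B(s,w_r)\neq n-1$ for every $s \in (S \cup S')\setminus\{r\}$. Prefixing $w_r$ by a single $d$ sends the third coordinate to $dead$ and shuts down all further contributions of $\lambda_\sigma^{B}$ to the middle coordinate; reading the resulting word as a unary context above $x$, every element of $S$ evolves independently via $B$, so $n-1$ appears in the root's middle coordinate precisely when $r \in S$. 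The delicate point is exactly this isolation: without the leading $d$, each unary symbol could simultaneously evolve every existing element of $S$ via $B$ \emph{and} inject a fresh $p_{leaf}=0$ via $\lambda_\sigma^{B}$, destroying the clean correspondence between ``$r \in S$'' and ``$n-1$ appears at the root''. The very reason $d$ was introduced into the alphabet of $T_A$ and $T_B$ is to make this kill-switch available.
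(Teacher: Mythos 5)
Your Cases 1 and 3 are broadly in the right spirit: the paper disposes of all pairs with $(S,p)\neq(S',p')$ by observing that under unary contexts the second and third coordinates of $C$ evolve exactly as in the Yu--Zhuang--Salomaa DFA for $L(A)L(B)$ and invoking the minimality of that DFA, plus an explicit $b^{n-1-s}$ argument for the dead value $p=m$; your $d$-prefix in Case~3 is essentially that explicit argument and handles the ``kill the third coordinate'' step cleanly. Your Case~1, however, is under-detailed as written: reading $a^{m-1-p}$ can make the \emph{other} computation also pass through the final state of $A$ and inject $p_{leaf}=0$ (e.g.\ when $p'>p$), after which ``drive $0\to n-1$'' accepts both sides. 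Since $b$ acts as the identity on $A$'s states and $a$ as the identity on $B$'s states, untangling this requires the full Yu--Zhuang--Salomaa distinguishability argument for the pairs $(S,p)$; it is cleaner to cite their minimality result outright, as the paper does.

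The genuine gap is Case~2. You claim that pairing $x$ with a sibling makes ``the parent's resulting vertical $p$-coordinate'' differ, ``reducing the situation to Case~1.'' That reduction is impossible: in the construction of Lemma~\ref{cons} the third component of the parent's horizontal state is $\gamma_\sigma^2(c^2,q)$, computed solely from the \emph{third} components of the children, so no choice of sibling converts a difference in the children's first coordinates into a difference in the parent's third ($M_A$-tracking) coordinate. If instead you only make the parent's \emph{first} coordinate differ, you are back in Case~2 one level up and have made no progress. The only coordinate the first component can influence is the middle one, through the cross-term $\bigcup_{c_2\in C_2}\gamma_\sigma^1(c_2,p_1)$ in $\mu$, and this is what the paper exploits: it places $x$ beside a sibling $u$ whose root state is $(i_1+1,\{i_1\},j)$ (available by Lemma~\ref{reach}), so that at the parent $b$-node the thread formed from $u$'s middle set $\{i_1\}$ and $x$'s first coordinate survives, contributing $i_1+1$ to the parent's middle set, exactly when that first coordinate equals $i_1$; a further $b^{2n-2-i_1}$ then carries $i_1+1$ to the accepting value $n-1$. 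So the correct reduction from Case~2 is into the middle coordinate (your Case~3), not into Case~1, and it requires the reachability lemma to supply the sibling.
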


According to the upper bound in Lemma~\ref{cons} and
Lemmas~\ref{reach} and~\ref{inequi}, we have proved the following
theorem.

\begin{theorem}\label{ti}
For arbitrary SDTAs $A_1$ and $A_2$, where
$A_i=(Q_i,\Sigma,\delta_i,F_i)$, $i=1,2$, any SDTA
$B=(Q,\Sigma,\delta,F)$ recognizing $L(A_2)\cdot L(A_1)$ satisfies
$|Q|\leq (|Q_1| + 1)\times (2^{|Q_1|} \times
(|Q_2|+1)-2^{|Q_1|-1})-1$.

For any integers $m,n\geq 1$, there exists tree languages $T_A$ and
$T_B$, such that $T_A$ and $T_B$ can be recognized by  SDTAs
having $m$ and $n$ vertical states, respectively,
 and any SDTA recognizing
$T_A \cdot T_B$ needs at least $(n+1) ( (m+1)2^n-2^{n-1} )-1$
vertical states.
\end{theorem}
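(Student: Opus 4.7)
The plan is to combine the upper-bound construction from Lemma~\ref{cons} with the lower-bound witness languages $T_A$ and $T_B$ already set up above. The upper bound $|Q| \leq (|Q_1|+1)(2^{|Q_1|}(|Q_2|+1) - 2^{|Q_1|-1}) - 1$ is nothing but the first component of the size bound of Lemma~\ref{cons}, so it requires no additional work.

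For the lower bound I would instantiate $A_1 := M_B$ (with $n$ vertical states) and $A_2 := M_A$ (with $m$ vertical states), and let $C$ be the SDTA for $T_A \cdot T_B$ produced by the construction of Lemma~\ref{cons}. The reachable vertical states of $C$ lie in the set described in~(\ref{state1}): triples $(q, S, p)$ with $0 \leq q \leq n$, $S \subseteq \{0, \ldots, n-1\}$, $0 \leq p \leq m$, subject to the side conditions that $p = m-1$ forces $0 \in S$ and that $S = \emptyset$ together with $q = n$ and $p = m$ is excluded. A direct count of these triples gives exactly $(n+1)((m+1)2^n - 2^{n-1}) - 1$.

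The proof then rests on two steps: (i) Lemma~\ref{reach}, already established, showing that every triple in~(\ref{state1}) is reached by $C$ on some input tree; and (ii) Lemma~\ref{inequi}, showing that any two distinct triples in~(\ref{state1}) are inequivalent, i.e.\ separated by some context $t \in T_\Sigma[x]$. Since SDTAs admit a unique minimal equivalent automaton (\cite{CLT}), a reachable and pairwise inequivalent set of vertical states in $C$ is a lower bound on the vertical-state count of \emph{any} SDTA recognizing $T_A \cdot T_B$, which closes the theorem.

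The only technically delicate step is Lemma~\ref{inequi}, whose proof is omitted due to length. The challenge is that the three coordinates $q$, $S$, $p$ play different roles and must be separated independently. I would build three families of distinguishing contexts: contexts using the symbol $d$, on which $M_A$ has no transitions, to ``kill'' the $M_A$-component so that differences in the $p$-coordinate become visible while $M_B$ keeps running; contexts using $b$- and $c$-labeled paths on which some branch has already undergone concatenation, mimicking the $m 2^n - 2^{n-1}$ construction of~\cite{YuZhSa94} in order to separate subsets $S$ of residual $M_B$-states; and contexts in which no concatenation occurs on the distinguishing branch, allowing differences in the $q$-coordinate (the ``no-concatenation'' $M_B$ computation) to be detected. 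Combining these three families into a single distinguishing context for an arbitrary pair of distinct triples is the main obstacle, and is where the special structure of $T_A$ and $T_B$ — in particular the ``agreement at every node'' condition on the path computations of $B$ — is crucial.
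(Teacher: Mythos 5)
Your proposal follows essentially the same route as the paper: the upper bound is read off from Lemma~\ref{cons}, and the lower bound is obtained by instantiating the construction on $M_A$ and $M_B$, counting the states in~(\ref{state1}), and invoking Lemma~\ref{reach} together with Lemma~\ref{inequi}. Your sketch of the omitted Lemma~\ref{inequi} is also consistent with the paper's (appendix) argument --- unary contexts mimicking the string construction of \cite{YuZhSa94} separate the $(S,p)$ part, and the first coordinate is separated via the agreement condition on $T_B$, though the paper does the latter with a two-branch context $b^{2n-1-i_1}(x,u)$ whose sibling branch forces the merged state into the dead first component when the children's first components disagree, rather than a ``no concatenation on the distinguishing branch'' context.
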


We do not have a matching lower bound for the number of horizontal
states given by Lemma~\ref{cons}. With regards to the number of
vertical states, both the upper bound of Lemma~\ref{cons} and the
lower bound of Theorem~\ref{ti} can be immediately modified for
WDTAs. (The proof holds almost word for word.)  In the case of
WDTAs, getting a good lower bound for the number of horizontal
states would likely be very hard.

\section{Conclusion}

We have studied the operational state complexity of two variants
of deterministic unranked tree automata. For  union and
intersection, tight upper bounds on the number of vertical states
were established for both strongly and weakly deterministic
automata. An almost tight upper bound on the number of horizontal
states was obtained in the case of strongly deterministic unranked
tree automata. For weakly deterministic  automata,
lower bounds on the numbers of horizontal states are hard to
establish because there can be trade-offs between the numbers of
vertical and horizontal states
.
This is indicated also by the fact that minimization of weakly
deterministic unranked tree automata is intractable and the
minimal automaton need not be unique~\cite{mn}.

As  ordinary
strings can be viewed as unary trees, it is easy to predict that
the state complexity of a given operation for tree automata should
be greater or equal to  the state complexity of the corresponding
operation on string languages.
As our main result, we showed that for
deterministic unranked tree automata, the
 state complexity of concatenation of an $m$ state and
an $n$ state automaton is at most $(n+1) ( (m+1)2^n-2^{n-1} )-1$
and that this bound can be reached in the worst case.
The bound is
of a different order than the known state complexity $m2^n-2^{n-1}$
of  concatenation of regular string languages.

\end{document}